\title{\LARGE \bf
On the Dual Implementation of Collision-Avoidance Constraints in Path-Following MPC for Underactuated Surface Vessels
}
\author{Simon Helling$^{1}$, Christian Roduner$^{2}$, and Thomas Meurer$^{1}$%
\thanks{$^{1}$Simon Helling and Thomas Meurer are with the Chair of Automatic Control, Faculty of Engineering, Kiel University, 24143 Kiel, Germany
    {\tt\small \{sh, tm\}@tf.uni-kiel.de}}%
\thanks{$^{2}$Christian Roduner, is with AVL Software and Functions GmbH, 93059 Regensburg, Germany
    {\tt\small \ christian.roduner@avl.com}}%
}
\pgfplotsset{compat=newest}
\tikzset{external/only named=true}
\newcommand{\vr}{\vec{r}}
\newcommand{\V}{\mathcal{V}}
\newcommand{\Om}{\mathcal{O}_{m}}
\newcommand{\vmum}{\vec{\mu}_{m}}
\newcommand{\vlambdam}{\vec{\lambda}_{m}}
\newcommand{\vzm}{\vec{z}_{m}}
\newcommand{\Am}{A_{m}}
\newcommand{\vbm}{\vec{b}_{m}} 
\newcommand{\subalign}[1]{%
    \vcenter{%
        \Let@ \restore@math@cr \default@tag
        \baselineskip\fontdimen10 \scriptfont\tw@
        \advance\baselineskip\fontdimen12 \scriptfont\tw@
        \lineskip\thr@@\fontdimen8 \scriptfont\thr@@
        \lineskiplimit\lineskip
        \ialign{\hfil$\m@th\scriptstyle##$&$\m@th\scriptstyle{}##$\hfil\crcr
            #1\crcr
        }%
    }%
}
\newtheorem{proposition}{Proposition}
\newtheorem{remark}{Remark}
\newtheorem{lemma}{Lemma}
\newcommand{\copyrightstatement}{
    \begin{textblock}{0.87}(0.06,0.93)    
        \noindent
        \textcopyright 2021 IEEE.  Personal use of this material is permitted.  Permission from IEEE must be obtained for all other uses, in any current or future media, including reprinting/republishing this material for advertising or promotional purposes, creating new collective works, for resale or redistribution to servers or lists, or reuse of any copyrighted component of this work in other works.
    \end{textblock}
}
\begin{document}
\maketitle
\copyrightstatement
\thispagestyle{empty}
\pagestyle{empty}

\begin{abstract}
    A path-following collision-avoidance model predictive control (MPC) method is proposed which approximates obstacle shapes as convex polygons. Collision-avoidance is ensured by means of the signed distance function which is calculated efficiently as part of the MPC problem by making use of a dual formulation. The overall MPC problem can be solved by standard nonlinear programming (NLP) solvers. The dual signed distance formulation yields, besides the (dual) collision-avoidance constraints, norm, and consistency constraints. A novel approach is presented that combines the arising norm equality with the dual collision-avoidance inequality constraints to yield an alternative formulation reduced in complexity. Moving obstacles are considered using separate convex sets of linearly predicted obstacle positions in the dual problem.  The theoretical findings and simplifications are compared with the often-used ellipsoidal obstacle formulation and are analyzed with regard to efficiency by the example of a simulated path-following autonomous surface vessel during a realistic maneuver and AIS obstacle data from the Kiel bay area.
\end{abstract}

\section{Introduction}
Collision-avoidance emerges as an essential problem for autonomous vessel operation and compliance to the international regulations for preventing collisions at sea (\mbox{COLREGs}), see \cite{Huang2020}. It is therefore crucial to construct efficient and robust implementations in order to achieve real-time feasible vessel trajectories. A powerful and widely used mathematical tool for this purpose is model predictive control (MPC) which sets itself apart from other nonlinear control approaches with its unique ability to handle input and state constraints. In this context, obstacles are usually approximated as ellipsoids, see, e.g., \cite{Eriksen2017}, \cite{Eriksen2020} and the references therein, which might not fit the requirements of autonomous operation in confined environments and lead to unnecessary or even infeasible (w.r.t. the environment) maneuvers. The ellipsoidal approach also fails to take into account the geometry of the controlled vessel. Consequently, a more flexible approach is to approximate obstacles as  CSG functions as in, e.g., \cite{Helling2020} or convex polygons which can be studied in combination with a fundamental concept in collision-avoidance, namely, the signed distance function. This can be expressed as
\begin{align}
    \label{eq:sec:introduction:definitionSignedDistance}
    \text{sd}(\V,\mathcal{O}) = \text{dist}(\V,\mathcal{O}) - \text{pen}(\V,\mathcal{O}),
\end{align}
where the sets $\V$ and $\mathcal{O}$ describe the geometry of the controlled vessel and an obstacle, respectively. Therein, {$\text{dist}(\V,\mathcal{O}):= \inf_{\vz}\{\norm{\vz}: (\V+\vz)\cap\mathcal{O}\neq\emptyset  \}$} describes the distance between the two sets
and
{$\text{pen}(\V,\mathcal{O}):= \inf_{\vz}\{\norm{\vz}: (\V+\vz)\cap\mathcal{O}=\emptyset  \}$} denotes the penetration depth. 
Collision-avoidance is ensured if 
\begin{align}
    \label{eq:sec:introduction:collisionAvoidanceCondition}
    \text{sd}(\V,\mathcal{O}) \geq d_{\text{safe}},
\end{align}
with an additional safety distance $d_{\text{safe}}\in\mathds{R}$.
In this context, several contributions have been made following different approaches, see, e.g., \cite{Schoels2020}, \cite{Schulman2013}, \cite{Borelli2020}. The work \cite{Schoels2020} is concerned with finding a convex inner approximation of the signed distance and a so-called action radius of a controlled robot and lays focus on kinodynamic constraints. In \cite{Schulman2013} a linearization of the signed distance function is used but approximation errors can lead to numerical difficulties. The contribution \cite{Borelli2020} utilizes concepts from convex optimization theory, see, e.g., \cite{Vandenberghe2004}, in order to transform the (primal) definition of the distance and penetration function to express the signed distance function in its dual form. 

In this contribution, the dual signed distance approach for convex polyhedra similar to \cite{Borelli2020}, which is based on \cite{Vandenberghe2004}, is revisited and a modification is introduced, which decreases the number of dual constraints in the MPC problem setup by combining constraints without altering the solution. The findings are embedded in a path-following setup, where an intuitive timing law is discussed that focuses on achieving convergence to the reference path defined by waypoints. The results are shown in a simulative study, where the different obstacle formulations are compared to each other in an exemplary manner. The simulation incorporates AIS data from the Kiel bay area. 

The paper is organized as follows. First, the vessel dynamics of the surface vessel model are given in Sec. \ref{sec:surfaceVesselModel}, where also the principles of modeling time-varying convex polyhedra is discussed. The contributions of \cite{Borelli2020} are briefly summarized in Sec. \ref{sec:dualSignedDistance} and the derivation of the proposed dual collision avoidance condition is given. Therein and throughout this contribution, we focus on the {full-body vessel} case, i.e., the vessel $\V$ is approximated as a convex polyhedron. However, this can be generalized to the case where $\V$ reduces to the vessel's center of origin (CO) $\vr$. In Sec. \ref{sec:mpcProblemFormulation} the path-following MPC problem is formulated, where the cost function and the timing law as well as the different obstacle constraint formulations are discussed in further detail. The latter distinguishes  between the ellipsoidal case, the (dual) polyhedral case according to \cite{Borelli2020}, and the proposed (dual) polyhedral case, which translates the conditions from Sec. \ref{sec:dualSignedDistance} into a set of obstacle constraints. The latter two formulations take into account the controlled vessel as a polygon while the ellipsoidal case only allows for the controlled vessel's CO to be considered. Simulation results are presented in Sec. \ref{sec:results}, including a comparison of the different implementations. Final remarks are provided in Sec. \ref{sec:conclusion}.
\section{Surface Vessel Model}\label{sec:surfaceVesselModel}
A three degrees of freedom (3DOF) model is used to perform the path following task. The 3DOF model utilizes two sets of coordinates. The first set, $\transpose{\vec{\eta}}=[\transpose{\vec{r}}, \, \psi ]$ describes the vessel's pose in the North-East-Down (NED) frame, where $\transpose{\vec{r}}=[x, \, y]$ is the position of the vessel's CO, where $x$ corresponds to the north and $y$ to the east coordinate. The third component $\psi$ describes the vessel orientation w.r.t. to north axis (heading). This set of coordinates is a reference frame for the second set $\tilde{\vnu}^\top = [ \tilde{u},\, v, \, r]$, which represents the vessel's surge and sway velocities as well as its yaw rate in a body-fixed coordinate frame, respectively. Therein, $\tilde u = u - u_0$ denotes the difference of the surge velocity from the nominal service speed with $u_0=\text{const.}$
\subsection{Dynamics}
The vessel dynamics can be represented using
\begin{align*}
\dot{\vec{\eta}}&=R_z(\psi)(\transpose{[u_0, \ 0,\ 0]}+\tilde\vnu)\\
M\dot{\tilde\vnu}&=\vec{\tau}(\tilde\vnu,\delta),
\end{align*}
where $\vec{\tau}(\cdot)$ is the vector of nonlinear forces and moments acting on the vessel. These forces and moments are approximated using a Taylor series approximation up to third order. The rudder angle $\delta$ constitutes the control input. The matrix $R_z(\psi)\in\text{SO}(3)$ connects the two coordinate frame velocities and the matrix $M$ specifies the system's inertia. Further information on system parameters can be found in \cite{Chislett1965}. With this, the system can be described as a continuous-time ordinary differential equation (ODE) 
\begin{align}\label{eq:sec:surfaceVessel:generalVesselOde}
    \vxd=\vf(\vx,u_c),\ t>0, \ \vx(0)=\hat{\vx}_0    
\end{align}
with the states $\transpose{\vx}=[\transpose{\vec{\eta}} ,\,\transpose{\tilde\vnu} ]$, control input $u_c=\delta$, and the fixed initial state given a measured or estimated (real) current state $\vxhat_{0}$.
\subsection{Geometry}
\begin{figure}[tb]
    \centering
    \vspace*{.12cm}
    \includegraphics{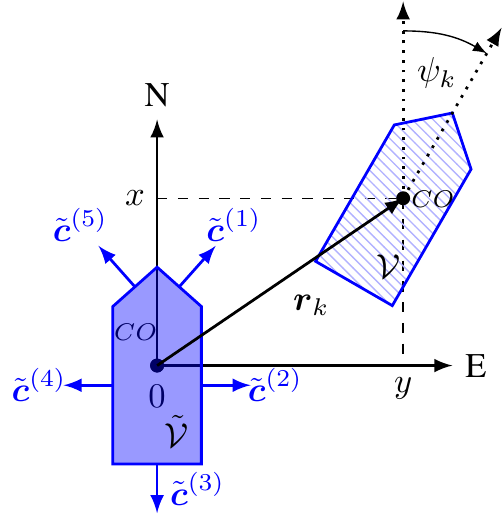}
    \captionsetup{belowskip=-.7cm}
    \caption{Base vessel shape $\tilde{\V}$ (blue) with outward normal vectors $\tilde{\vec{c}}^{(j)},\, j=1,\hdots,L$ for $L=5$ edges and vessel shape $\V(t_k)$ (striped blue) at $t_k$ in the pose $\vec{\eta}^\top_k=\vec{\eta}^\top(t_k)=[\vec{r}^\top_k,\,\psi_k]$ and with $C_k=C(t_k),\,\vec{d}_k=\vec{d}(t_k)$.}\label{fig:geometry}
\end{figure}

In the following, a base shape\footnote{In this context $\vx$ is a general vector in the two-dimensional space and not a system state as in, e.g., \eqref{eq:sec:surfaceVessel:generalVesselOde}. The different meanings will be clear from the context throughout this paper.} \mbox{$\tilde{\V} = \{\vx\inRn{2}\, \big\vert \, \tilde{C}\vx\leq\tilde{\vec{d}} \}$} is defined for the controlled vessel, which describes a convex polyhedron where the vessel's CO and the NED frame's coordinate origin coincide. Therein, $\tilde{C}=\transpose{[\tilde{\vec{c}}^{(1)}\hdots \tilde{\vec{c}}^{(L)}]}\inRnm{L}{2}$ consists of the (outward) normal vectors $\tilde{\vec{c}}^{(j)},\ j=1,\hdots,L$ defining the edges of the polyhedron, where $L$ is the number of edges needed for the controlled vessel and $\tilde{\vec{d}}\inRn{L}$ depends on the specific shape. This base shape is used to describe the controlled vessel shape's evolution in time, i.e.,
\begin{subequations}
    \label{eq:sec:surfaceVesselModel:controlledVesselShape}
    \begin{align}
    \V &= T(\tilde{\V})=\left\{ T(\vx)\, \vert \,\vx\in\tilde{\V} \right\}\\
    &= \left\{ \vx\inRn{2} \,\vert \,\tilde{C}T^{-1}(\vx)\leq \tilde{\vec{d}}\right\}\\
    \label{eq:sec:surfaceVesselModel:controlledVesselShapeFinal}
    &= \left\{ \vx\inRn{2}\, \vert\, C \vx \leq \vec{d} \right\},
    \end{align}
\end{subequations}
where the abbreviations $C = \tilde{C} \transpose{R}_z(\psi)$ and $\vec{d} = 	\tilde{\vec{d}}+\tilde{C}R_z^\top(\psi)\vec{r}$ and the affine transformations
\begin{align*}
    T&:\mathds{R}^2\mapsto\mathds{R}^2, \vx\mapsto R_z\left(\psi\right)\vx+\vec{r},\\
    T^{-1}&:\mathds{R}^2\mapsto\mathds{R}^2, \vx\mapsto \transpose{R_z}\left(\psi\right)\left(\vx-\vec{r}\right),
\end{align*}
have been used. For the sake of clarity explicit time dependencies are omitted. This transformation preserves the convexity of $\tilde{\V}$, see \cite{Vandenberghe2004}. See Fig. \ref{fig:geometry} for an illustration of the two sets. 

Analogously to \eqref{eq:sec:surfaceVesselModel:controlledVesselShape}, the base shape of any obstacle $m=1,\hdots,M$ can be described as a convex set $\tilde{\mathcal{O}}_m=\{\vx\inRn{2}\, \vert \, \tilde{A}_m\vx\leq\tilde{\vec{b}}_m \}$, with $\tilde{A}_m=[\tilde{\vec{a}}_{m}^{(1)}\, \hdots\, \tilde{\vec{a}}_{m}^{(L_m)}]^\top\inRnm{L_m}{2}$, which consists of the (outward) normal vectors $\tilde{\vec{a}}_{m}^{(j)},\ j=1,\hdots,L_m$ defining the edges of the $m$-th obstacle as a polyhedron, where $L_m$ is the number of edges and $\tilde{\vec{b}}_m\inRn{L_m}$ depends on the specific shape. The time evolution of the $m$-th obstacle shape follows as 
\begin{align}
    \label{eq:sec:surfaceVesselModel:obstacleShape}
    \mathcal{O}_m=\left\{\vx\inRn{2}\,\big\vert \, A_m\vx\leq\vec{b}_m \right\},
\end{align}
where {$A_m = \tilde{A}_m \transpose{R}_z(\psi_m)$} and {$\vec{b}_m = \tilde{\vec{b}}_m+\tilde{A}\transpose{R}_z(\psi_m)\vec{r}_m$}, with $\vec{\eta}_m^\top=[\vec{r}_m^\top,\, \psi_m]$ being the pose of the $m$-th obstacle.

\section{Dual Signed Distance Calculation}\label{sec:dualSignedDistance}
\begin{figure}[tb]
    \centering
    \vspace*{.12cm}
    \includegraphics{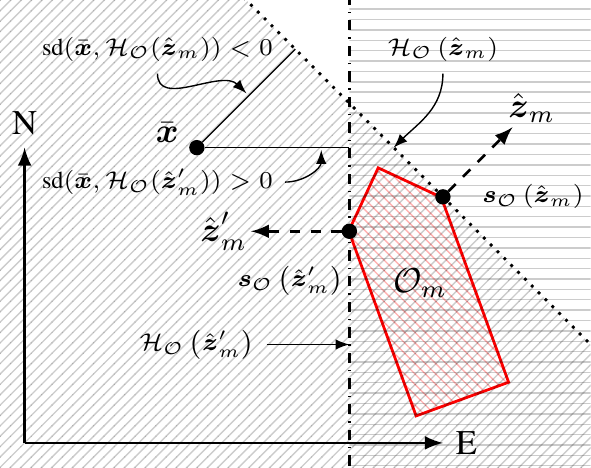}
    \captionsetup{belowskip=-.7cm} 
    \caption{Signed distances for a point $\bar{\vec{x}}$ and an obstacle $\mathcal{O}_m$ for two different vectors $\hat{\vec{z}}_m$ and $\hat{\vec{z}}_m^\prime$, as well as the corresponding support points $\vec{s}_{\mathcal{O}}(\cdot)$, supporting hyperplanes $\mathcal{H}_{\mathcal{O}}(\cdot)$ and lower supporting half-spaces $\mathcal{H}^-_{\mathcal{O}}(\cdot)$ (diagonal and horizontal gray lines). Note that in this example, $\text{sd}\left(\bar{\vec{x}},\mathcal{H}_{\mathcal{O}}(\vec{z}_m)\right)<0$ since $\bar{\vec{x}}\in\mathcal{H}^-_{\mathcal{O}}(\vec{z}_m)$ but $\text{sd}\left(\bar{\vec{x}},\mathcal{H}_{\mathcal{O}}\left(\hat{\vec{z}}_m^\prime\right)\right)>0$ since $\bar{\vec{x}}\notin\mathcal{H}^-_{\mathcal{O}}\left(\hat{\vec{z}}_m^\prime\right)$.}\label{fig:signedDistanceExample}
\end{figure}
In order to calculate the signed distance as defined by \eqref{eq:sec:introduction:definitionSignedDistance} efficiently, the dual approach to optimization-based collision-avoidance as shown, e.g., in \cite{Borelli2020} is extended. Therefore, the findings of \cite{Borelli2020} concerning the signed distance are briefly summarized. Here, we focus on the full-body vessel case, where both the controlled vessel and the obstacles are considered as fully-dimensional objects according to \eqref{eq:sec:surfaceVesselModel:controlledVesselShapeFinal} and \eqref{eq:sec:surfaceVesselModel:obstacleShape}, respectively. Furthermore, the original approach of \cite{Borelli2020} is extended and an alternative formulation is derived which potentially allows a more efficient calculation of the signed distance.
Since this contribution focuses mainly on results from convex optimization theory, the interested reader is referred to \cite{Vandenberghe2004} for further details and concepts on this topic. 
\subsection{Approach from \cite{Borelli2020}}
According to \cite{Borelli2020}, the collision avoidance condition as defined by \eqref{eq:sec:introduction:collisionAvoidanceCondition} can be expressed as\footnote{The results are adapted to the notation used in the present paper. Also, explicit time dependencies are omitted for the sake of clarity.}
\begin{align}
    \label{eq:sec:dualSignedDistance:resultsBorelli}
    \begin{split}
        &\text{sd}(\V,\mathcal{O}_m) \geq d_{\text{safe}} \\
        &\Leftrightarrow \exists\vlambdam\geq\vec{0},\vmum\geq\vec{0}:\\
        &\qquad\qquad -\transpose{\vec{\lambda}_m}\vec{d}-\transpose{\vec{\mu}_m}\vec{b}_m\geq d_{\text{safe}}\\
        &\qquad\qquad\norm{\transpose{A_m}\vec{\mu}_m}=1,\\ &\qquad\qquad\transpose{C}\vec{\lambda}_m+\transpose{A_m}\vec{\mu}_m=\vec{0},
    \end{split}
\end{align}
where $\vec{\lambda}_m\inRn{L}$ are (dual) variables associated with the controlled vessel for the $m$-th obstacle and $\vec{\mu}_m\inRn{L_m}$ are (dual) variables associated with the $m$-th obstacle. 
\subsection{Proposed approach}
For the derivation of the alternative approach, some Lemmata are in order.
\begin{lemma}[\cite{VanDenBergen2001}]\label{lemma:support_mapping}
    The support mapping $\transpose{\vzm}\vec{s}_{\mathcal{O}}(\vzm) = \sup_{\vx\in\mathcal{O}_m}\{\transpose{\vzm}\vx\}$, where $\vx\in\mathcal{O}_m\Leftrightarrow  \Am\vx\leq\vbm $,
    gives the maximum inner product for any $\vx\in\Om$ and thus defines the support point $\vec{s}_{\mathcal{O}}(\vzm)$ as the point in $\mathcal{O}_m$ which is furthest away from the origin in the direction of $\vzm$.
\end{lemma}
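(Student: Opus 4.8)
The statement is essentially the definition of the support function of a convex set together with its geometric interpretation, so the plan is to verify that the supremum is attained and then to make the geometric meaning precise. First I would observe that the obstacle set $\Om=\{\vx : \Am\vx\leq\vbm\}$ is a closed convex polyhedron which, representing a physical obstacle, is bounded and hence compact. The objective $\vx\mapsto\transpose{\vzm}\vx$ is a continuous (indeed linear) functional, so by the Weierstrass extreme value theorem the supremum in $\sup_{\vx\in\Om}\{\transpose{\vzm}\vx\}$ is attained at some point of $\Om$. Denoting any such maximizer by $\vec{s}_{\mathcal{O}}(\vzm)$, the supremum becomes a maximum and by construction $\transpose{\vzm}\vec{s}_{\mathcal{O}}(\vzm)=\max_{\vx\in\Om}\transpose{\vzm}\vx$, which establishes the first claim.

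Second, I would make the geometric interpretation precise. For a fixed direction $\vzm$, the quantity $\transpose{\vzm}\vx/\norm{\vzm}$ is the signed length of the orthogonal projection of $\vx$ onto the ray spanned by $\vzm$. Maximizing $\transpose{\vzm}\vx$ over $\Om$ is therefore equivalent to selecting the point whose projection onto this direction is largest, which is exactly the point of $\Om$ that lies furthest in the direction of $\vzm$, namely the support point $\vec{s}_{\mathcal{O}}(\vzm)$. Consequently $\vec{s}_{\mathcal{O}}(\vzm)$ lies on the supporting hyperplane $\{\vx:\transpose{\vzm}\vx=\transpose{\vzm}\vec{s}_{\mathcal{O}}(\vzm)\}$ and the whole set $\Om$ is contained in the lower half-space $\{\vx:\transpose{\vzm}\vx\leq\transpose{\vzm}\vec{s}_{\mathcal{O}}(\vzm)\}$, consistent with the construction illustrated in Fig.~\ref{fig:signedDistanceExample}.

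There is no substantial obstacle here, since the result is classical; the only points requiring care are the implicit compactness (boundedness) assumption on $\Om$, which guarantees existence of the maximizer, and the phrase ``furthest away from the origin,'' which should be read as the largest signed projection onto the direction $\vzm$ rather than as Euclidean distance from the origin. If $\Om$ were unbounded in the direction $\vzm$, the supremum would diverge and no support point would exist, so boundedness of the obstacle polyhedron is the hypothesis the argument truly relies on. I would finally note that the maximization $\sup_{\vx\in\Om}\transpose{\vzm}\vx$ is a linear program over the polyhedron $\Om$; this structure is not needed for the lemma itself, but it is precisely what will later enable the dual reformulation of the signed distance via linear-programming duality.
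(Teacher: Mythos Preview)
Your argument is correct and complete: you establish attainment of the supremum via compactness of the bounded polyhedron and continuity of the linear objective, and you correctly interpret the maximizer as the point with largest signed projection onto $\vzm$. Note, however, that the paper does not supply its own proof of this lemma at all; it is simply cited from \cite{VanDenBergen2001} as a standard result, so there is no paper-proof to compare against. Your write-up therefore provides the elementary justification that the paper takes for granted, and your closing remarks on the implicit boundedness hypothesis and the underlying linear-programming structure are apt, since the latter is indeed what drives the dual reformulation in \eqref{eq:sec:dualSignedDistance:innerMinimizationDual}.
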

\begin{lemma}[\cite{Vandenberghe2004}]\label{lemma:convex_set_as_intersection_of_halfspaces}
    The convex set $\Om$ can be described using the intersection of all lower supporting half-spaces that contain it, i.e., $\Om=\bigcap_{\vzm} \mathcal{H}^-(\vzm,\vzm^\top\vec{s}_{\mathcal{O}}(\vzm))$, 
    with\footnote{In the following, the abbreviation $\mathcal{H}^-_{\mathcal{O}}(\vzm)=\mathcal{H}^-(\vzm,\vzm^\top\vec{s}_{\mathcal{O}}(\vzm))$ is used.} {$\mathcal{H}^-(\vzm,\vzm^\top\vec{s}_{\mathcal{O}}(\vzm))=\{ \vx\inRn{2} \,\vert\,  \transpose{\vzm}\vx \leq \transpose{\vzm}\vec{s}_{\mathcal{O}}(\vzm)\}$}.
\end{lemma}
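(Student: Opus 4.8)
The plan is to establish the set identity $\Om=\bigcap_{\vzm}\mathcal{H}^-_{\mathcal{O}}(\vzm)$ by proving the two inclusions separately. Throughout I would rely on the identity from Lemma~\ref{lemma:support_mapping}, namely that $\transpose{\vzm}\vec{s}_{\mathcal{O}}(\vzm)=\sup_{\vx\in\Om}\transpose{\vzm}\vx$ is exactly the support function of $\Om$ evaluated in the direction $\vzm$, together with the fact that $\Om$ is a nonempty, closed, and bounded (hence compact) convex set, being the polyhedron $\{\vx\inRn{2}\,\vert\,\Am\vx\leq\vbm\}$ that models a physical obstacle.

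For the inclusion $\Om\subseteq\bigcap_{\vzm}\mathcal{H}^-_{\mathcal{O}}(\vzm)$, I would fix an arbitrary point $\vx_0\in\Om$ and an arbitrary direction $\vzm$. By the definition of the supremum in Lemma~\ref{lemma:support_mapping} one immediately has $\transpose{\vzm}\vx_0\leq\sup_{\vx\in\Om}\transpose{\vzm}\vx=\transpose{\vzm}\vec{s}_{\mathcal{O}}(\vzm)$, so that $\vx_0\in\mathcal{H}^-_{\mathcal{O}}(\vzm)$. Since $\vzm$ was arbitrary, $\vx_0$ belongs to every lower supporting half-space and hence to their intersection; this direction needs no separation argument.

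The reverse inclusion $\bigcap_{\vzm}\mathcal{H}^-_{\mathcal{O}}(\vzm)\subseteq\Om$ is the crux, and I would argue it by contraposition: given any $\vx_0\notin\Om$, I would exhibit a single direction whose lower supporting half-space already excludes $\vx_0$. The key tool is the strict separating-hyperplane theorem \cite{Vandenberghe2004}: since $\Om$ is nonempty, closed and convex and $\vx_0\notin\Om$, there exist a direction $\bar{\vzm}\neq\vec{0}$ and a scalar $c\in\mathds{R}$ such that $\transpose{\bar{\vzm}}\vx\leq c<\transpose{\bar{\vzm}}\vx_0$ for all $\vx\in\Om$. Taking the supremum over $\vx\in\Om$ and using Lemma~\ref{lemma:support_mapping} gives $\transpose{\bar{\vzm}}\vec{s}_{\mathcal{O}}(\bar{\vzm})\leq c<\transpose{\bar{\vzm}}\vx_0$, so $\vx_0\notin\mathcal{H}^-_{\mathcal{O}}(\bar{\vzm})$ and therefore $\vx_0\notin\bigcap_{\vzm}\mathcal{H}^-_{\mathcal{O}}(\vzm)$. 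Combining the two inclusions yields the claimed identity.

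The main obstacle is precisely this reverse inclusion and its reliance on strict separation, where closedness and convexity of $\Om$ are indispensable: for a non-closed or non-convex set the intersection of supporting half-spaces would instead return its closed convex hull. A secondary technical point I would verify is that the support point $\vec{s}_{\mathcal{O}}(\vzm)$ is actually attained, which follows from compactness of the polyhedron $\Om$; this is what justifies phrasing each half-space through the genuine supporting point $\vec{s}_{\mathcal{O}}(\vzm)$ rather than through an abstract supremum value, so that the bounding hyperplanes truly touch $\Om$.
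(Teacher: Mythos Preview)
Your argument is correct and is exactly the standard route to this result: the forward inclusion is immediate from the definition of the support function, and the reverse inclusion follows from the strict separating-hyperplane theorem applied to a closed convex set and an external point. Your remarks on closedness, convexity, and attainment of the support point are all to the point.

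As for comparison with the paper: the paper does not actually supply a proof of this lemma. It is stated with attribution to \cite{Vandenberghe2004} and then used as a black box in the proof of Proposition~\ref{prop:sd_Vk_Okm}. Your proof is precisely the classical argument one finds in that reference (Boyd and Vandenberghe, Section~2.5), so in spirit you have reproduced the source the paper defers to rather than diverged from it.
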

\begin{remark}
    As can be seen, compared to \cite{Borelli2020}, the vector \mbox{$\vzm$}  in Lemma \ref{lemma:convex_set_as_intersection_of_halfspaces} does not need not be constrained in its length. This also becomes clear from the fact that $\vzm$ constitutes a normal vector of the supporting hyperplane \mbox{$
    \mathcal{H}_{\mathcal{O}}(\vzm)=\{ \vx\inRn{2}\, \vert \,   \transpose{\vzm}\vx = \transpose{\vzm}\vec{s}_{\mathcal{O}}(\vzm)  \}$} that is intrinsically connected to the lower supporting half-space $\mathcal{H}^-_{\mathcal{O}}(\vzm)$.
\end{remark}
\begin{lemma}[\cite{VanDenBergen2001}]\label{lemma:signed_distnace_minkowski_diff}
    The signed distance can be expressed as \mbox{$\text{sd}(\V,\Om) = \text{sd}(\vec{0},\Om-\V)$},
    where \mbox{$\Om-\V :=\{ \vx - \vy \, \vert \, \vx\in\Om,\ \vy\in\V \}$} describes the Minkowski difference.
\end{lemma}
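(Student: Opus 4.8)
The plan is to reduce each of the two terms in the definition \eqref{eq:sec:introduction:definitionSignedDistance} of $\text{sd}(\V,\Om)$, i.e.\ the distance and the penetration depth, to the corresponding point-to-set quantity for the origin $\vec{0}$ and the Minkowski difference $\Om-\V$, and then to subtract the two resulting identities. The entire argument hinges on one set-theoretic equivalence, which I would establish first: for any translation vector $\vec{z}$, the translate $\V+\vec{z}$ meets $\Om$ exactly when $\vec{z}\in\Om-\V$.

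First I would verify this equivalence by unfolding the definitions. Any common point of $\V+\vec{z}$ and $\Om$ has the form $\vec{y}+\vec{z}$ with $\vec{y}\in\V$ and simultaneously equals some $\vec{x}\in\Om$, whence $\vec{z}=\vec{x}-\vec{y}\in\Om-\V$; conversely, if $\vec{z}=\vec{x}-\vec{y}$ with $\vec{x}\in\Om$ and $\vec{y}\in\V$, then $\vec{x}=\vec{y}+\vec{z}$ witnesses $\vec{x}\in(\V+\vec{z})\cap\Om$. Thus $(\V+\vec{z})\cap\Om\neq\emptyset \Leftrightarrow \vec{z}\in\Om-\V$, and passing to complements, $(\V+\vec{z})\cap\Om=\emptyset \Leftrightarrow \vec{z}\notin\Om-\V$. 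Notably, this step uses nothing about convexity.

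Next I would substitute these characterizations into the defining infima. The feasibility condition appearing in $\text{dist}(\V,\Om)$ becomes precisely $\vec{z}\in\Om-\V$; specializing the same definition to the singleton $\{\vec{0}\}$ and the set $\Om-\V$, the condition $(\{\vec{0}\}+\vec{z})\cap(\Om-\V)\neq\emptyset$ is again $\vec{z}\in\Om-\V$, so the two infima run over identical feasible sets and $\text{dist}(\V,\Om)=\text{dist}(\vec{0},\Om-\V)$. The complementary characterization yields the analogous identity $\text{pen}(\V,\Om)=\text{pen}(\vec{0},\Om-\V)$ for the penetration depth. Subtracting and invoking \eqref{eq:sec:introduction:definitionSignedDistance} on both sides then gives $\text{sd}(\V,\Om)=\text{sd}(\vec{0},\Om-\V)$.

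Because the claim dissolves into routine substitution once the intersection/Minkowski equivalence is available, I do not anticipate a genuine obstacle. The only subtlety worth handling carefully is the point-to-set specialization, i.e.\ writing $\vec{0}$ as the singleton set $\{\vec{0}\}$ so that the $\inf_{\vec{z}}$ structure on the right-hand side matches that on the left, and, more generally, being consistent about the orientation of the Minkowski difference $\Om-\V$ throughout.
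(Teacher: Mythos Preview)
Your argument is correct: the key observation $(\V+\vec{z})\cap\Om\neq\emptyset \Leftrightarrow \vec{z}\in\Om-\V$ is exactly what is needed, and once it is in place both $\text{dist}$ and $\text{pen}$ reduce termwise to their point-to-set counterparts, so that subtracting gives the claim via \eqref{eq:sec:introduction:definitionSignedDistance}. Note, however, that the paper does not supply its own proof of this lemma; it is quoted as a known fact from \cite{VanDenBergen2001} and merely invoked in the proof of Proposition~\ref{prop:sd_Vk_Okm}. There is thus no paper proof to compare against, but your derivation is the standard one and would serve perfectly well as a self-contained justification.
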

\begin{lemma}[\cite{VanDenBergen2001}]\label{lemma:support_point_minkowski_diff}
    The support point $\vec{s}_{\mathcal{O}-\V}(\vzm)$ of the Minkowski difference $\Om-\V$ can be expressed as $\vec{s}_{\mathcal{O}-\V}(\vzm)=\vec{s}_{\mathcal{O}}(\vzm)-\vec{s}_{\V}(-\vzm)$.
\end{lemma}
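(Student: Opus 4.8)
The plan is to work directly from the variational characterisation of the support mapping in Lemma~\ref{lemma:support_mapping} and to exploit the additive separability of the linear objective over the Minkowski difference. Concretely, I would start by writing the support mapping of $\Om-\V$ in direction $\vzm$ as
\begin{align*}
    \transpose{\vzm}\vec{s}_{\mathcal{O}-\V}(\vzm) = \sup_{\vec{w}\in\Om-\V}\transpose{\vzm}\vec{w}.
\end{align*}
First I would substitute the definition of the Minkowski difference from Lemma~\ref{lemma:signed_distnace_minkowski_diff}, i.e.\ every $\vec{w}\in\Om-\V$ is of the form $\vec{w}=\vx-\vy$ with $\vx\in\Om$ and $\vy\in\V$, so that the supremum becomes a joint supremum over two variables ranging over a Cartesian product,
\begin{align*}
    \sup_{\vec{w}\in\Om-\V}\transpose{\vzm}\vec{w} = \sup_{\vx\in\Om,\,\vy\in\V}\left(\transpose{\vzm}\vx - \transpose{\vzm}\vy\right).
\end{align*}

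The key step is then to observe that the objective decouples additively and that $\vx$ and $\vy$ are subject to no coupling constraints; hence the joint supremum splits into the sum of two independent suprema,
\begin{align*}
    \sup_{\vx\in\Om,\,\vy\in\V}\left(\transpose{\vzm}\vx - \transpose{\vzm}\vy\right) = \sup_{\vx\in\Om}\transpose{\vzm}\vx + \sup_{\vy\in\V}\transpose{(-\vzm)}\vy.
\end{align*}
By Lemma~\ref{lemma:support_mapping} the first term equals $\transpose{\vzm}\vec{s}_{\mathcal{O}}(\vzm)$ and the second equals $\transpose{(-\vzm)}\vec{s}_{\V}(-\vzm)=-\transpose{\vzm}\vec{s}_{\V}(-\vzm)$, and since $\Om$ and $\V$ are compact convex polyhedra both suprema are attained. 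Collecting terms gives $\transpose{\vzm}\vec{s}_{\mathcal{O}-\V}(\vzm)=\transpose{\vzm}\left(\vec{s}_{\mathcal{O}}(\vzm)-\vec{s}_{\V}(-\vzm)\right)$.

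Finally, to pass from the support value to the support point I would note that the maximiser of the separated problem is exactly the pair $\vx^{\star}=\vec{s}_{\mathcal{O}}(\vzm)$, $\vy^{\star}=\vec{s}_{\V}(-\vzm)$, so that the maximising $\vec{w}^{\star}=\vx^{\star}-\vy^{\star}$ yields the claimed identity $\vec{s}_{\mathcal{O}-\V}(\vzm)=\vec{s}_{\mathcal{O}}(\vzm)-\vec{s}_{\V}(-\vzm)$. I expect the only point requiring care to be the separability argument: it is legitimate precisely because the feasible set of $(\vx,\vy)$ is the product $\Om\times\V$, so the two blocks can be maximised independently. A secondary subtlety is that the support point need not be unique when $\vzm$ is normal to an edge of $\Om$ or $\V$; the identity should therefore be read as holding for the support point as a (single-valued) selection in generic directions $\vzm$, or as a set-valued identity otherwise.
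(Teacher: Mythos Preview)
Your argument is correct and is precisely the standard proof of this fact: rewrite the support mapping of $\Om-\V$ as a joint supremum over $\Om\times\V$, split by additive separability, and identify each piece via Lemma~\ref{lemma:support_mapping}. The closing remark on non-uniqueness of the support point in degenerate directions is also appropriate.

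There is nothing to compare against, however: the paper does not prove Lemma~\ref{lemma:support_point_minkowski_diff} at all but merely imports it from \cite{VanDenBergen2001} as a cited result, so your derivation supplies what the paper leaves to the reference.
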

\begin{proposition}\label{prop:sd_Vk_Okm}
    The collision-avoidance condition \eqref{eq:sec:introduction:collisionAvoidanceCondition} can be formulated with
        \begin{align}
            \label{eq:sec:dualSignedDistance:proposition}
            \begin{split}
                &\text{sd}(\vr,\Om) \geq d_{\text{safe}} \\
                &\Leftrightarrow \exists \vlambdam\geq\vec{0},\vmum\geq\vec{0}:\\
                &\qquad\qquad\frac{-\transpose{\vlambdam}\vec{d}-\transpose{\vmum}\vbm }{\norm{\transpose{\Am}\vmum}}\geq d_{\text{safe}}\\ 
                &\qquad\qquad\transpose{C}\vlambdam+\transpose{\Am}\vmum=\vec{0}.
            \end{split}
        \end{align}
\end{proposition}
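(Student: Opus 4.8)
The plan is to derive the stated equivalence directly from Lemmas~\ref{lemma:support_mapping}--\ref{lemma:support_point_minkowski_diff}, arranging the computation so that the Euclidean norm enters the \emph{denominator} of the avoidance inequality instead of appearing as the separate equality $\norm{\transpose{\Am}\vmum}=1$ of \eqref{eq:sec:dualSignedDistance:resultsBorelli}. The single reason the two forms coincide is positive homogeneity, as anticipated by the Remark following Lemma~\ref{lemma:convex_set_as_intersection_of_halfspaces}: the normal $\vzm$ need not be normalized, because the ratio built from the support values is invariant under $\vzm\mapsto\alpha\vzm$ for $\alpha>0$, while the constraints $\vlambdam,\vmum\geq\vec{0}$ and $\transpose{C}\vlambdam+\transpose{\Am}\vmum=\vec{0}$ describe a cone. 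I would make this precise in three steps, carried out for the full-body vessel set $\V$ that introduces the dual pair $(\vlambdam,\vmum)$ together with $C$ and $\vec{d}$ in \eqref{eq:sec:dualSignedDistance:proposition}.

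First I would use Lemma~\ref{lemma:signed_distnace_minkowski_diff} to write $\text{sd}(\V,\Om)=\text{sd}(\vec{0},\mathcal{K})$ with $\mathcal{K}:=\Om-\V$, reducing everything to the signed distance of the origin from a single convex set. Next, invoking Lemma~\ref{lemma:convex_set_as_intersection_of_halfspaces}, I would express $\mathcal{K}$ as the intersection of its lower supporting half-spaces and observe that the signed distance of the origin from the supporting hyperplane with (unnormalized) normal $\vzm$ is $-\transpose{\vzm}\vec{s}_{\mathcal{K}}(\vzm)/\norm{\vzm}$, the sign being governed by whether the origin lies in the associated lower half-space (cf.\ Fig.~\ref{fig:signedDistanceExample}). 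Maximizing over directions then yields the single expression $\text{sd}(\vec{0},\mathcal{K})=\max_{\vzm\neq\vec{0}}\,(-\transpose{\vzm}\vec{s}_{\mathcal{K}}(\vzm))/\norm{\vzm}$, valid simultaneously in the separated and the penetrating regime. Finally, Lemma~\ref{lemma:support_point_minkowski_diff} splits the support value as $\transpose{\vzm}\vec{s}_{\mathcal{K}}(\vzm)=\transpose{\vzm}\vec{s}_{\mathcal{O}}(\vzm)-\transpose{\vzm}\vec{s}_{\V}(-\vzm)$, and each piece is a linear program over the polyhedral description (Lemma~\ref{lemma:support_mapping}) whose dual reads $\transpose{\vzm}\vec{s}_{\mathcal{O}}(\vzm)=\min\{\transpose{\vmum}\vbm:\transpose{\Am}\vmum=\vzm,\ \vmum\geq\vec{0}\}$ and $\transpose{(-\vzm)}\vec{s}_{\V}(-\vzm)=\min\{\transpose{\vlambdam}\vec{d}:\transpose{C}\vlambdam=-\vzm,\ \vlambdam\geq\vec{0}\}$. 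Substituting these duals and identifying $\vzm=\transpose{\Am}\vmum$ and $-\vzm=\transpose{C}\vlambdam$ collapses the nested maximizations into one supremum over $(\vlambdam,\vmum)\geq\vec{0}$ subject to $\transpose{C}\vlambdam+\transpose{\Am}\vmum=\vec{0}$ of the objective $(-\transpose{\vlambdam}\vec{d}-\transpose{\vmum}\vbm)/\norm{\transpose{\Am}\vmum}$; requiring this supremum to be at least $d_{\text{safe}}$ is exactly the existence statement in \eqref{eq:sec:dualSignedDistance:proposition}.

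I expect the main obstacle to be the second step: proving the one-line signed-distance formula uniformly across both cases. When the origin lies outside $\mathcal{K}$ the maximum returns $\text{dist}(\vec{0},\mathcal{K})$ through a maximum-margin separating hyperplane, whereas when it lies inside the very same maximum returns $-\text{pen}(\vec{0},\mathcal{K})$ as a minimum over directions of the support value; reconciling these requires careful bookkeeping of the sign of $\transpose{\vzm}\vec{s}_{\mathcal{K}}(\vzm)$ and an appeal to a supporting/separating-hyperplane argument. Two further technical points must be handled: strong LP duality together with attainment of the minima, which I would secure by assuming $\Om$ and $\V$ are bounded polyhedra so that the support programs are finite and solvable for every $\vzm$; and the degenerate direction $\transpose{\Am}\vmum=\vec{0}$, which must be excluded since the ratio is then undefined. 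As a consistency check against \eqref{eq:sec:dualSignedDistance:resultsBorelli}, rescaling any feasible pair by $\alpha=1/\norm{\transpose{\Am}\vmum}>0$ restores the normalization $\norm{\transpose{\Am}\vmum}=1$ and turns the ratio into the plain inequality $-\transpose{\vlambdam}\vec{d}-\transpose{\vmum}\vbm\geq d_{\text{safe}}$, confirming that the proposed and the original formulations possess identical feasible sets and hence encode the same avoidance condition.
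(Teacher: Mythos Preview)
Your proposal is correct and follows essentially the same route as the paper: Minkowski-difference reduction (Lemma~\ref{lemma:signed_distnace_minkowski_diff}), the supporting-hyperplane representation of the signed distance (Lemma~\ref{lemma:convex_set_as_intersection_of_halfspaces} together with \eqref{eq:sec:dualSignedDistance:signedDistanceXbarPartialH}--\eqref{eq:sec:dualSignedDistance:signedDistanceXbarOm}), the split of the support value via Lemma~\ref{lemma:support_point_minkowski_diff}, and LP/Lagrange duality to obtain \eqref{eq:sec:dualSignedDistance:signedDistanceResult}. The only cosmetic differences are that the paper first normalizes $\hat{\vec{z}}_m$ and takes the Lagrange dual of the joint inner minimization \eqref{eq:sec:dualSignedDistance:signedDistancePrimal} in one shot, whereas you keep $\vzm$ unnormalized throughout and dualize the two support LPs separately; since the inner problem decouples in $(\vx,\vy)$, the two computations coincide.
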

\begin{proof}
    Consider the distance from an arbitrary point $\bar{\vx}\inRn{2}$ to the supporting hyperplane $\mathcal{H}_{\mathcal{O}}(\vz_m)$ defined by a normal vector $\vzm$ of an obstacle $\Om$ given by \eqref{eq:sec:surfaceVesselModel:obstacleShape}, i.e., 
    \begin{align*}
        \transpose{\hat{\vec{z}}_m}\vec{s}_\mathcal{O}(\vzm)-\transpose{\hat{\vec{z}}_m}\bar{\vx},
    \end{align*}
    where $\hat{\vec{z}}_m=\nicefrac{\vec{z}_m}{\norm{\vec{z}_m}_2}$ is the normalized vector with direction $\vec{z}_m$.
    In fact, this can be regarded as the negated signed distance between $\bar{\vx}$ and $\mathcal{H}_\mathcal{O}(\hat{\vz}_m)$, i.e.,
    \begin{align}
        \label{eq:sec:dualSignedDistance:signedDistanceXbarPartialH}
        \begin{split}
            \text{sd}(\bar{\vx},&\mathcal{H}_{\mathcal{O}}(\hat{\vz}_m)) \\
            &= 		\transpose{\hat{\vec{z}}_m}\left( \bar{\vx}-\vec{s}_{\mathcal{O}}(\hat{\vz}_m) \right)\\
            & = 
            \begin{cases}
                \leq 0, \quad \text{if\ } \bar{\vx}\in\mathcal{H}^-_{\mathcal{O}}(\hat{\vz}_m)\\
                > 0, \quad \text{else}
            \end{cases}
        \end{split}
    \end{align}
    see also Fig. \ref{fig:signedDistanceExample} where an example for two different vectors $\hat{\vec{z}}_m$ and $\hat{\vec{z}}_m^\prime$ is shown.
    The observation \eqref{eq:sec:dualSignedDistance:signedDistanceXbarPartialH} together with Lemma \ref{lemma:convex_set_as_intersection_of_halfspaces}
    can be combined to formulate the signed distance between the point $\bar{\vx}$ and the obstacle $\Om$ with 
    \begin{align}
        \label{eq:sec:dualSignedDistance:signedDistanceXbarOm}
        \text{sd}(\bar{\vx},\Om)=\sup_{\hat{\vec{z}}_m}\left\{ \text{sd}(\bar{\vx},\mathcal{H}_{\mathcal{O}}(\hat{\vec{z}}_m)) \right\},
    \end{align}
    or, in other words, it is the signed distance between the point $\bar{\vec{x}}$ and a particular supporting hyperplane for which \eqref{eq:sec:dualSignedDistance:signedDistanceXbarPartialH} yields the largest possible value.
    Subsequently, if the the convex set $\V$ is considered instead of the arbitrary point $\bar{\vec{x}}$, Lemma \ref{lemma:signed_distnace_minkowski_diff} and Lemma \ref{lemma:support_point_minkowski_diff} can be applied to give
    \begin{subequations}
        \begin{align}
            &\text{sd}(\V,\Om) = \text{sd}(\vec{0},\Om-\V)\\
            \label{eq:sec:dualSignedDistance:signedDistanceOriginMinkwoski}
            &=\sup_{\hat{\vec{z}}_m}\left\{ \text{sd}(\vec{0},\mathcal{H}_{\mathcal{O}-\V}(\hat{\vec{z}}_m)) \right\}\\
            &=\sup_{\hat{\vec{z}}_m}\left\{ -\transpose{\hat{\vec{z}}_m}\vec{s}_{\mathcal{O}-\V}(\hat{\vec{z}}_m) \right\}\\
            \label{eq:sec:dualSignedDistance:signedDistanceWithSupportPoints}
            &=\sup_{\hat{\vec{z}}_m}\left\{ -\transpose{\hat{\vec{z}}_m}\left( \vec{s}_{\mathcal{O}}(\hat{\vec{z}}_m)-\vec{s}_{\V}(-\hat{\vec{z}}_m) \right) \right\}.
        \end{align}
    \end{subequations}
    Comparing \eqref{eq:sec:dualSignedDistance:signedDistanceOriginMinkwoski} with \eqref{eq:sec:dualSignedDistance:signedDistanceXbarOm}, it can be observed that the signed distance between the two convex sets is equivalent to the signed distance between the origin and the Minkowski difference of the two sets. Incorporating Lemma \ref{lemma:support_mapping} for the arising support mappings in \eqref{eq:sec:dualSignedDistance:signedDistanceWithSupportPoints} yields
    \begin{align}
        \label{eq:sec:dualSignedDistance:signedDistancePrimal}
        \text{sd}(\V,\Om)&=
            \sup_{\hat{\vec{z}}_m}
                    \bigg\{\inf_{
                        \vx \in\mathcal{O}_m,\,
                        \vy\in\mathcal{V}}
                    \Big\{\transpose{\hat{\vec{z}}_m}(\vy-\vx)\Big\}\bigg\},
    \end{align}
    where $-\sup(\varphi)=\inf(-\varphi)$ and $\inf_A(\varphi)+\inf_B(\Phi) = \inf_{A\times B}(\varphi+\Phi)$ have been used, see \cite{Lojasiewicz1988}. With this, the Lagrange dual problem see, e.g., \cite{Vandenberghe2004} of the inner minimization in \eqref{eq:sec:dualSignedDistance:signedDistancePrimal} can be derived which leads to
    \begin{align}
        \label{eq:sec:dualSignedDistance:innerMinimizationDual}
        \begin{split}
        \text{sd}(\V,\Om)=
            &\sup_{
                        \vlambdam\geq\vec{0},\,
                        \vmum\geq\vec{0},\,
                        \vec{z}_m
                }
            \Bigg\{
                \frac{-\transpose{\vec{\lambda}_{m}}\vec{d}-\transpose{\vmum}\vbm}{\norm{\vec{z}_m}}:\\
                &-C^\top\vlambdam=\vec{z}_m,\,\Am^\top\vmum=\vec{z}_m
            \Bigg\},
        \end{split}
    \end{align}
    where $\sup_{x\in A}\left\{\sup_{y\in B}\varphi(x,y)\right\}=\sup_{A\times B}\left\{\varphi(x,y)\right\}$ has been used, see, e.g. \cite{Lojasiewicz1988}. Finally, eliminating $\vec{z}_m$ in \eqref{eq:sec:dualSignedDistance:innerMinimizationDual} yields
    \begin{align}
        \label{eq:sec:dualSignedDistance:signedDistanceResult}
        \begin{split}
            \text{sd}(\V,\Om)=&\sup_{\vmum\geq\vec{0},\,\vec{\lambda}_{m}\geq\vec{0}}\Bigg\{ \frac{-\transpose{\vec{\lambda}_{m}}\vec{d}-\transpose{\vmum}\vbm }{\norm{\transpose{\Am}\vmum}}:\\
            &\transpose{C}\vec{\lambda}_m+\transpose{A_{m}}\vmum=\vec{0} \Bigg\}.
        \end{split}
    \end{align}
    Since $\mathcal{O}_m$ and $\V$ are convex, strong duality holds. Thus, in order for the collision avoidance condition \eqref{eq:sec:introduction:collisionAvoidanceCondition} to hold, it suffices to find any feasible $\vlambdam,\vmum$ w.r.t. \eqref{eq:sec:dualSignedDistance:signedDistanceResult} such that 
    \begin{align*}
        \frac{-\transpose{\vec{\lambda}_{m}}\vec{d}-\transpose{\vmum}\vbm }{\norm{\transpose{\Am}\vmum}}\geq d_{\text{safe}}
    \end{align*}
    which concludes the proof.\footnote{Note that a similar reasoning can be applied to derive dual signed distance conditions when the controlled vessel is considered as the point given by $\vr$ only.}
\end{proof}

\section{MPC problem formulation}\label{sec:mpcProblemFormulation}
In the following, a soft-constrained path-following MPC will be discussed which can be realized by repeatedly solving optimal control problems on a receding horizon, which in turn can be written as
\begin{subequations}
    \label{eq:sec:mpcProblemFormulation:generalMpc}
    \begin{align}
        \label{eq:sec:mpcProblemFormulation:generalMpc:cost}
        \min_{u_{\text{c}},\,\vec{\epsilon}}&\ J(u_{\text{c}},\vec{\epsilon})= \int_{t_i}^{t_{i}+t_{\text{hor}}} \underbrace{\norm{\vec{r}-\vec{p}(\theta)}_2 + P(\vec{\epsilon})}_{=l(\vx,\theta,u_{\text{c}},\vec{\epsilon})}\dt\\
        \label{eq:sec:mpcProblemFormulation:generalMpc:rhsState}
        \text{s.t.}\quad 
        &\vxd=\vf(\vx,u_{\text{c}}), \quad \vx(t_i)=\vxhat_{i}\\
        \label{eq:sec:mpcProblemFormulation:generalMpc:rhsPathParameter}
        &\dot{\theta}=q(\theta,\vx),\quad \ \ \, \theta(t_i)=\hat{\theta}_i\\
        \label{eq:sec:mpcProblemFormulation:generalMpc:collisionAvoidanceConstraint}
        &h_{m}(\vx)\leq \epsilon_m, \quad  \ \ \ \: \, m=1,\hdots,M\\
        \label{eq:sec:mpcProblemFormulation:generalMpc:boxConstraints}
        &u_{\text{c}}^-\leq u_{\text{c}}\leq u_{\text{c}}^+\\
        \label{eq:sec:mpcProblemFormulation:generalMpc:rateConstraints}
        &\abs{\dot{u}_{\text{c}}}\leq \dot{u}_{\text{c}}^{\text{max}}\\
        \label{eq:sec:mpcProblemFormulation:generalMpc:epsilonPositive}
        &\vec{\epsilon}\geq\vec{0}
    \end{align}
\end{subequations}
where $t_i$ is the current iteration time, $t_{\text{hor}}$ is the prediction horizon. The control input $u_c$ and the slack variables $\vec{\epsilon}\inRn{M}$ constitute the decision variables. In this problem, \eqref{eq:sec:mpcProblemFormulation:generalMpc:cost} is the cost to be minimized and $l(\cdot)$ denotes the running costs, where $\norm{\vec{r}-\vec{p}(\theta)}_2$ gives the Euclidean distance or cross track error between the vessel position $\vec{r}$ and the reference path $\vec{p}(\theta)$ as a function of the path parameter $\theta$ which is typically parameterized as the arc length of the path. Each successive minimization is subject to the dynamic equality constraints \eqref{eq:sec:mpcProblemFormulation:generalMpc:rhsState}, given by \eqref{eq:sec:surfaceVessel:generalVesselOde}, and the path following timing law \eqref{eq:sec:mpcProblemFormulation:generalMpc:rhsPathParameter}, which dictates the propagation in time of the path parameter $\theta$ and constitutes a degree of freedom in the path-following MPC. The initial path parameter $\hat{\theta}_i$ depends on the vessel state $\hat{\vec{x}}_i$ and is calculated before each iteration. The inequality constraints  \eqref{eq:sec:mpcProblemFormulation:generalMpc:collisionAvoidanceConstraint} represent state constraints and, in this particular context, obstacle constraints for which three different implementations are considered. Using the soft-constrained approach, any violation of \eqref{eq:sec:mpcProblemFormulation:generalMpc:collisionAvoidanceConstraint} is penalized in the cost function by means of the term $P(\vec{\epsilon})=\transpose{\vec{\epsilon}}S\vec{\epsilon}+\transpose{\vec{s}}\vec{\epsilon}$, which can be used to avoid feasibility issues see, e.g., \cite{Scokaert1999a}. The inequality constraints \eqref{eq:sec:mpcProblemFormulation:generalMpc:boxConstraints} represent box constraints on the control input, and \eqref{eq:sec:mpcProblemFormulation:generalMpc:rateConstraints} enforces input rate constraints. The inequality \eqref{eq:sec:mpcProblemFormulation:generalMpc:epsilonPositive} ensures that the slack variables are non-negative.
See also \cite{Borelli2020}, \cite{Scokaert1999a} for further information on the soft-constrained approach and \cite{Faulwasser2016} for theoretical results on nonlinear model predictive path following control. In the following, a detailed view of the constraints in \eqref{eq:sec:mpcProblemFormulation:generalMpc} with regard to a dual collision-avoidance path-following MPC for an underactuated surface vessel is given.

\subsection{Timing Law}
The timing law \eqref{eq:sec:mpcProblemFormulation:generalMpc:rhsPathParameter} for the path-following MPC is chosen  based on \cite{Paliotta2019} to be
\begin{align}  
  \label{eq:sec:mpcProblemFormulation:timingLaw}
    \dot{\theta} &=\underbrace{u_0\left[1-\sigma\tanh\left(\frac{\transpose{\vec{e}}(\theta,\vx)\vec{e}(\theta,\vx)}{L_{\text{pp}}^2}\right)\right]}_{=q(\theta,\vx)},
\end{align}
where $u_0$ is the service speed of the vessel, $\sigma\in(0,1)$ is a tuning parameter, $L_{\text{pp}}$ is the length of the vessel, and 
\begin{align*}
    \vec{e}(\theta,\vec{x}) = \vec{r}- \vec{p}(\theta)
\end{align*}
is the vector defining the cross track error $\norm{\vec{e}}_2$ between the vessel position $\vec{r}$ and the current reference point on the line segment defined by two consecutive waypoints $\vec{r}^{\text{wp}}_i$ and $\vec{r}^{\text{wp}}_{i+1}$, i.e.
\begin{align*}
    \vec{p}(\theta)&=\vec{r}^{\text{wp}}_i + \alpha\left(\theta\right)(\vec{r}^{\text{wp}}_{i+1}-\vec{r}^{\text{wp}}_i),
\end{align*}
where $i$ denotes the current waypoint and
\begin{align*}
    \alpha\left(\theta\right)=\frac{\theta-\sum_{j=1}^ {i-1}\norm{\vec{r}^{\text{wp}}_{j+1}-\vec{r}^{\text{wp}}_j}_2}{\norm{\vec{r}^{\text{wp}}_{i+1}-\vec{r}^{\text{wp}}_i}_2}.
\end{align*}
As can be seen, $\theta\in\left[0,\,\sum_i \norm{\vec{r}^{\text{wp}}_i-\vec{r}^{\text{wp}}_{i+1}}_2\right]$ is parameterized as the arc length of the reference path. The timing law \eqref{eq:sec:mpcProblemFormulation:timingLaw} ensures that the propagation satisfies $q(\theta,\vx)>0, \  \forall t$ and that it propagates slowly if the distance between the vessel position and the reference point is large and vice versa. See \cite{Faulwasser2016} for further information on path-following MPC.

\subsection{Obstacle constraints}\label{subsec:obstacleConstraints}
In the following, the obstacle constraints \eqref{eq:sec:mpcProblemFormulation:generalMpc:collisionAvoidanceConstraint} are considered. First, the often-used ellipsoidal representation of obstacles is presented, where the controlled vessel's geometry can not be taken into account explicitly. Subsequently, the proposition \eqref{eq:sec:dualSignedDistance:resultsBorelli} based on \cite{Borelli2020} is considered along with the proposed representation given by \eqref{eq:sec:dualSignedDistance:signedDistanceResult}, 
which both assume the controlled vessel and all obstacles to be of polyhedral shape.
\subsubsection{Ellipsoidal formulation}
An ellipsoidal obstacle can be represented with $\mathcal{O}_m^\circ=\{ \vx\inRn{2}\vert (\vx-\vr_m)^\top M_m(\vx-\vr_m)\leq 1\}$, where $\vr_m$ is the position of the $m$-th obstacle's CO, $M_m=\text{diag}\{ \nicefrac{1}{(a_m+d_{\text{safe}})^2},\, \nicefrac{1}{(b_m+d_{\text{safe}})^2} \}\inRnm{2}{2}$, and $a_m,b_m$ define the ellipse's major and minor axis length, respectively\footnote{Note that for $a_m=b_m$, $\mathcal{O}_m^\circ$ represents a circle with radius $R_m=a_m=b_m$.}. This formulation can be translated directly into the obstacle constraint defined by \eqref{eq:sec:mpcProblemFormulation:generalMpc:collisionAvoidanceConstraint}, i.e.,
\begin{align}
    \label{eq:sec:mpcProblemFormulation:ellipsoidalConstraints}
    1-(\vr-\vr_m)^\top M_m(\vr-\vr_m) \leq  \epsilon_m,\ \ m=1,\hdots,M.
\end{align}
With this, \eqref{eq:sec:introduction:collisionAvoidanceCondition} can be ensured, but the explicit geometry of the controlled vessel can not be taken into account. Note that a logarithmic relaxation of \eqref{eq:sec:mpcProblemFormulation:ellipsoidalConstraints} often yields a better scaled problem, see, e.g., \cite{Bitar2019b}.
\subsubsection{Dual polyhedral formulation according to \cite{Borelli2020}}
The conditions \eqref{eq:sec:dualSignedDistance:resultsBorelli} described in \cite{Borelli2020} can be translated into a set of obstacle constraints
\begin{subequations}
    \label{eq:sec:mpcProblemFormulation:borelliConstraints}
    \begin{align}
    \label{eq:sec:mpcProblemFormulation:borelliCollisionAvoidance}
    d_{\text{safe}}+\transpose{\vec{\lambda}_m}\vec{d}+\transpose{\vmum}\vbm &\leq \epsilon_m\\
    \label{eq:sec:mpcProblemFormulation:borelliNorm}
    \norm{\transpose{\Am}\vmum}&=1\\
    \label{eq:sec:mpcProblemFormulation:borelliConsistency}
    \transpose{C}\vec{\lambda}_m+\transpose{A_m}\vmum&=\vec{0}\\
    \label{eq:sec:mpcProblemFormulation:borelliNonnegativity}	
    \vmum,\vlambdam&\geq\vec{0}
    \end{align}
\end{subequations}
for all $m=1,\hdots,M$. Therefore, additional dual obstacle decision variables \mbox{$ [\transpose{\vmu_{1}} \ \hdots\ \transpose{\vmu_{M}}]$} and dual controlled vessel decision variables \mbox{$[ \transpose{\vlam_{1}} \ \hdots\ \transpose{\vlam_{M}}]$} need to be taken into account as part of \eqref{eq:sec:mpcProblemFormulation:generalMpc}.
In comparison to \eqref{eq:sec:mpcProblemFormulation:generalMpc}, the constraint \eqref{eq:sec:mpcProblemFormulation:borelliCollisionAvoidance} corresponds to the collision-avoidance inequality constraints \eqref{eq:sec:mpcProblemFormulation:generalMpc:collisionAvoidanceConstraint} in the general setup. The constraints \eqref{eq:sec:mpcProblemFormulation:borelliNonnegativity} constitute additional box constraints. Additionally, the dual formulation \eqref{eq:sec:mpcProblemFormulation:borelliConstraints} relies on norm equality constraints \eqref{eq:sec:mpcProblemFormulation:borelliNorm}. Moreover, consistency equality constraints \eqref{eq:sec:mpcProblemFormulation:borelliConsistency} need to be imposed.
\subsubsection{Proposed dual polyhedral formulation}\label{sec:alternative_formulation}
Taking into account Proposition \ref{prop:sd_Vk_Okm}, the constraints for a dual collision-avoidance formulation can be written as
\begin{subequations}
    \label{eq:sec:mpcProblemFormulation:proposedConstraints}
    \begin{align}
        d_{\text{safe}}+\frac{\transpose{\vec{\lambda}_m}\vec{d}+\transpose{\vmum}\vbm}{\norm{\transpose{\Am}\vmum}}&\leq \epsilon_m,\\
        \transpose{C}\vec{\lambda}_m+\transpose{A_m}\vmum&=\vec{0},\\
        \label{eq:mpc_Vk_Okm_box_constraints_compact}	
        \vmum,\vlambdam&\geq\vec{0}
    \end{align}
\end{subequations}
for all $m=1,\hdots,M$. In this way, the number of necessary equality constraints can be reduced compared to \eqref{eq:sec:mpcProblemFormulation:borelliConstraints} which leads to a less complex problem. The decision variables, inequality and box constraints remain unaltered compared to \eqref{eq:sec:mpcProblemFormulation:borelliConstraints}.
\section{Simulation results}\label{sec:results}
\begin{figure}[t!]
    \centering
    \vspace*{.2cm}%
    \subcaptionbox{Simulation snapshot for iteration $i=79$ using polygons for obstacles and the controlled vessel showing the vessel's initial pose (magenta), current and predicted poses (green points with blue polygons), predicted reference points (violet), reference path (black lines), waypoints (black points), and the past vessel trajectory (dash-dotted orange). One of the two obstacles (black polygon with red safety margin) is currently blocking the path. \label{fig:overview_a}}[\columnwidth]
    {
        \includegraphics{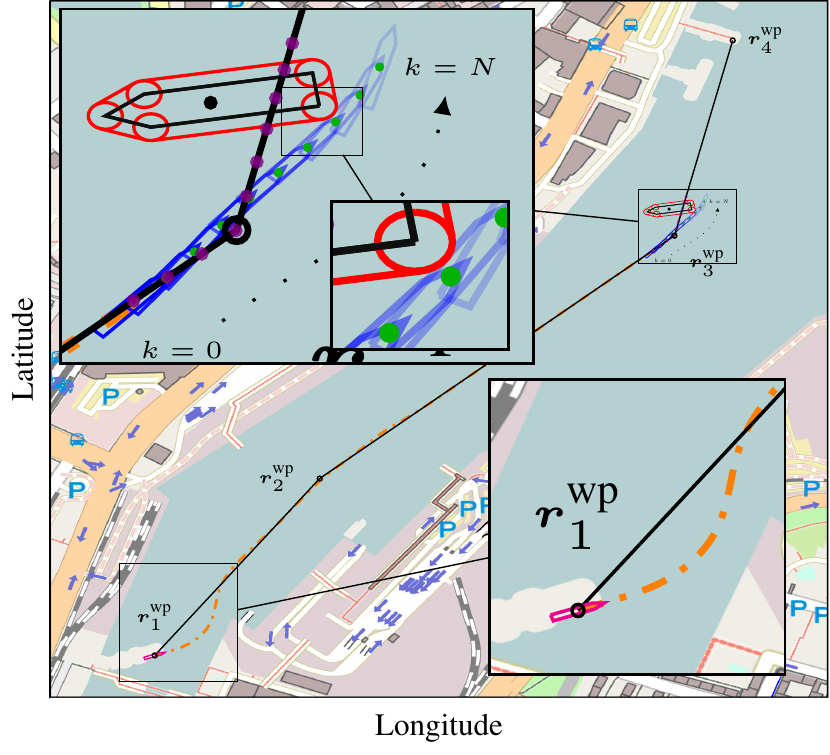}
    }\\
    \subcaptionbox{Input $\delta$ (blue) with constraints (dashed red) for entire simulation.\label{fig:overview_b}}[\columnwidth]
    {
        \includegraphics{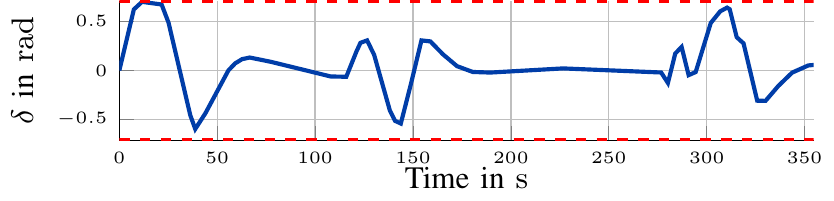}
    }%
    \captionsetup{belowskip=-.7cm}
    \caption{Overview of the simulation scenario and path-following performance.}\label{fig:overview}
\end{figure}
 A mariner class vessel is  used for the simulation and is scaled to a length of $L_{\text{pp}}=\SI{25}{\meter}$ in order to represent a typical ferry size from the Kiel bay area. The rudder angle is constrained to $\delta\in[ -\SI{40}{\degree}, \SI{40}{\degree} ]$ and permits rudder-rate changes up to $\dot{\delta}_{\text{max}}=\SI[per-mode=symbol]{5}{^\circ\per\second}$. The constant service speed of the scaled model reads $u_0\approx \SI[per-mode=symbol]{3}{\meter\per\second}$. Four reference waypoints are chosen such that a transition from the pier at Kiel central station to Kiel Reventlou Bridge is achieved. Two obstacles with $L_1=L_2=5$ are chosen from AIS data, where the first is a static obstacle and the second passes the reference path during the maneuver. The safety margin is set to $d_{\text{safe}}=\SI{4}{\meter}$. The tuning parameter of the timing law \eqref{eq:sec:mpcProblemFormulation:timingLaw} is set to $\sigma=0.2$. The initial state reads $\hat{\vx}_0=\transpose{[\SI{54.3154}{\degree N}, \SI{10.135592}{\degree E}, \SI{1.2}{\radian},  \SI[per-mode=symbol]{0}{\meter\per\second}, \SI[per-mode=symbol]{0}{\meter\per\second}, \SI[per-mode=symbol]{0}{\radian\per\second}]}$ and each problem \eqref{eq:sec:mpcProblemFormulation:generalMpc} is solved using a direct multiple shooting method with $N=10$ discretization steps and sample time $\Delta t=\SI{3.5}{\second}$. All dynamic constraints and costs are integrated numerically using a trapezoidal scheme. See, e.g., \cite{Betts1998} for an overview of different numerical methods for optimal control problems. The implementation utilizes \matlab with \snopt as NLP solver, see \cite{Gill2005}, on an Intel$^{\text{\textregistered}}$ Core\textsuperscript{TM} i5-6200U CPU with 2.30\si{\giga\Hz} clock speed. A snapshot of the path-following simulation is depicted in  Fig. \ref{fig:overview_a}, where a good path-following performance can be observed. Note that only one of the obstacles is seen in this snapshot. Further, note that a comparison between the original and proposed implementation shows no significant difference in the solution trajectory and is therefore omitted. The realized input for the entire simulation is depicted in Fig. \ref{fig:overview_b}. The vessel is subject to disturbances induced by wind similar to \cite{Helling2020} and an Extended Kalman Filter is used to estimate the system states. The different evading behaviors depending on the obstacle formulation are shown in Fig. \ref{fig:evading} for the obstacle that has already been passed in the snapshot in Fig. \ref{fig:overview_a}. As can be seen in Fig. \ref{fig:evadingPolyhedron}, the polyhedral formulation allows for a smaller cross track error when evading the obstacle while in Fig. \ref{fig:overview_a}, the ellipsoidal obstacle requires the controlled vessel to move further away from the path. While this issue could be overcome using a true ellipse instead of a circle, the ellipsoidal formulation can not take into account the polyhedral geometry of the controlled vessel (although it is shown in the figure). This results in parts of the vessel being inside the safety margin of the ellipsoidal obstacle while the dual polyhedral formulation circumvents this issue, see Fig. \ref{fig:evading}. The latter is therefore more suited for applications in confined environments.
\begin{figure}[t!]
    \vspace*{.2cm}%
    \subcaptionbox{Ellipsoidal formulation.\label{fig:evadingCircle}}[.5\columnwidth]
    {
        \includegraphics{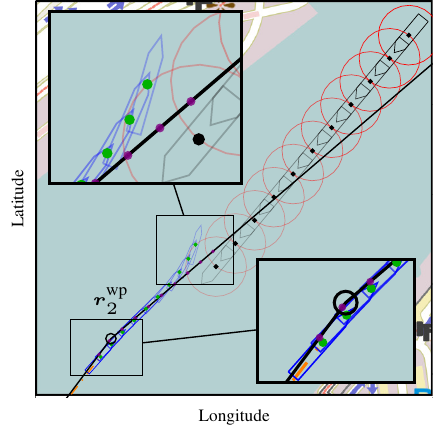}
    }%
    \subcaptionbox{Dual polyhedral formulation.\label{fig:evadingPolyhedron}}[.5\columnwidth]
    {
        \includegraphics{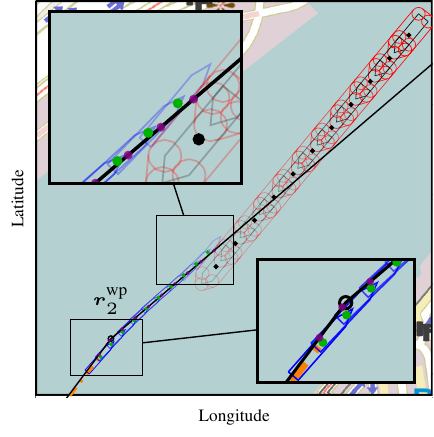}
    }
    \captionsetup{belowskip=-.7cm}
    \caption{Simulation snapshot at iteration $i=33$ showing the evading behavior of the ellipsoidal obstacle formulation (left) and the dual polyhedral formulation (right) with predicted vessel poses (green points and blue polygons), safety margins (red) and predicted reference points (violet).}\label{fig:evading}
\end{figure}

In Tab. \ref{tab:computation_times}, the average computation time per iteration can be compared along with the resulting nonlinear static program (NLP) dimensions concerning the different obstacle formulations discussed in Sec. \ref{subsec:obstacleConstraints}. Compared to the ellipsoidal case, the computation time increases for the dual polyhedral case due to the increased number of decision variables and constraints. Especially incorporating the explicit vessel geometry increases the problem dimension. The proposed dual formulation outperforms the original implementation without a significant change in the resulting trajectories rendering the proposed method a promising candidate for optimization-based collision-avoidance strategies in confined environments.
\begin{table}[htb]
    \renewcommand{\arraystretch}{1.2}
    \caption{Comparison of general problem dimensions regarding obstacle constraints and decision variables, as well as average computation times (per MPC iteration) for different obstacle implementations, where P stands for the proposed polyhedral implementation.}
    \label{tab:computation_times}
    \centering
    \noindent
    \begin{tabulary}{\columnwidth}{m{1cm} m{1.6cm} ||C |C C}
        \toprule
        \multicolumn{2}{l||}{\multirow{2}{*}{Obstacle formulation}} & {Ellipsoidal} & \multicolumn{2}{c}{Polyhedral} \\
        & & \eqref{eq:sec:mpcProblemFormulation:ellipsoidalConstraints} & \eqref{eq:sec:mpcProblemFormulation:borelliConstraints} & \multicolumn{1}{l}{\eqref{eq:sec:mpcProblemFormulation:proposedConstraints}}\\
        \midrule
        \midrule
        \multirow{2}{=}{decision variables} &{dual obstacle} & $0$ & \multicolumn{2}{c}{ $(N+1)\sum L_m$}\\
        &{dual vessel}   & $0$ & \multicolumn{2}{c}{ $(N+1)ML$}\\
        \midrule
        \multirow{3}{=}{constraints}& norm                        & $0$ & $(N+1)M$ &0\\
        & con\-sis\-ten\-cy           & $0$ & \multicolumn{2}{c}{ $2(N+1)M$} \\
        & co\-lli\-sion-avoi\-dan\-ce & \multicolumn{3}{c}{-----------\ $(N+1)M$\ -----------}\\
        \midrule
        \multicolumn{2}{c||}{$\overline{t}_{\text{cpu}}$ in \si{\second}} & 0.22 &{0.51}&{0.42}\\
        \bottomrule
    \end{tabulary}
\end{table}
\vspace*{-.3cm}
\section{Conclusion}\label{sec:conclusion}
A path-following MPC with a dual implementation based on \cite{Borelli2020} of the signed distance function is presented. The signed distance function is used to formulate a soft-constrained collision-avoidance optimal control problem for convex polyhedral obstacles, where the controlled vessel's geometry is taken into account. In this contribution, an alternative formulation is introduced resulting in a less complex NLP. The theoretical findings are validated using a simulative study with AIS data from the Kiel bay area. The proposed formulation yields a more efficient implementation compared to the problem in \cite{Borelli2020}. The simulation compares the dual formulations with an ellipsoidal obstacle formulation. Future studies will focus on reducing problem complexity of the dual approach, especially since the number of dual decision variables increases disproportionately with an increasing number of obstacles, as well as experimental verification of the proposed concepts. 
\vspace*{-.25cm}

\bibliographystyle{IEEEtran}
\bibliography{root}

\begin{thebibliography}{10}
\providecommand{\url}[1]{#1}
\csname url@samestyle\endcsname
\providecommand{\newblock}{\relax}
\providecommand{\bibinfo}[2]{#2}
\providecommand{\BIBentrySTDinterwordspacing}{\spaceskip=0pt\relax}
\providecommand{\BIBentryALTinterwordstretchfactor}{4}
\providecommand{\BIBentryALTinterwordspacing}{\spaceskip=\fontdimen2\font plus
\BIBentryALTinterwordstretchfactor\fontdimen3\font minus
  \fontdimen4\font\relax}
\providecommand{\BIBforeignlanguage}[2]{{%
\expandafter\ifx\csname l@#1\endcsname\relax
\typeout{** WARNING: IEEEtran.bst: No hyphenation pattern has been}%
\typeout{** loaded for the language `#1'. Using the pattern for}%
\typeout{** the default language instead.}%
\else
\language=\csname l@#1\endcsname
\fi
#2}}
\providecommand{\BIBdecl}{\relax}
\BIBdecl

\bibitem{Huang2020}
Y.~Huang, L.~Chen, P.~Chen, R.~R. Negenborn, and P.~H. van Gelder, ``{Ship
  collision avoidance methods: State-of-the-art},'' \emph{Safety Science}, vol.
  121, no. 2020, pp. 451--473, 2020.

\bibitem{Eriksen2017}
B.-O.~H. Eriksen and M.~Breivik, ``{MPC-Based mid-level collision avoidance for
  ASVs using nonlinear programming},'' \emph{IEEE Conference on Control
  Technology and Applications (CCTA)}, pp. 766--772, 2017.

\bibitem{Eriksen2020}
B.-O.~H. Eriksen, G.~Bitar, M.~Breivik, and A.~M. Lekkas, ``{Hybrid Collision
  Avoidance for ASVs Compliant With COLREGs Rules 8 and 13-17},''
  \emph{Frontiers in Robotics and AI}, vol.~7, pp. 1--18, 2020.

\bibitem{Helling2020}
S.~Helling, M.~Lutz, and T.~Meurer, ``{Flatness-based MPC for underactuated
  surface vessels in confined areas},'' \emph{IFAC World Congress, to appear},
  2020.

\bibitem{Schoels2020}
T.~Schoels, P.~Rutquist, L.~Palmieri, A.~Zanelli, K.~O. Arras, and M.~Diehl,
  ``Ciao$^\star$: Mpc-based safe motion planning in predictable dynamic
  environments,'' 2020.

\bibitem{Schulman2013}
J.~Schulman, J.~Ho, A.~Lee, I.~Awwal, H.~Bradlow, and P.~Abbeel, ``{Finding
  Locally Optimal, Collision-Free Trajectories with Sequential Convex
  Optimization},'' \emph{Robotics: Science and Systems IX}, 2013.

\bibitem{Borelli2020}
X.~Zhang, A.~Liniger, and F.~Borrelli, ``{Optimization-Based Collision
  Avoidance},'' \emph{IEEE Transactions on Control Systems Technology}, pp.
  1--12, 2020.

\bibitem{Vandenberghe2004}
L.~Vandenberghe and S.~Boyd, \emph{{Convex Optimization}}.\hskip 1em plus 0.5em
  minus 0.4em\relax Cambridge University Press, 2004, vol.~7.

\bibitem{Chislett1965}
M.~S. Chislett and J.~Strom-Tejsen, ``{Planar Motion Mechanism Tests and Full-
  Scale Steering and Manoeuvring Predictions for a Mariner Class Vessel},''
  \emph{International Shipbuilding Progress}, vol.~12, no. 129, pp. 201--224,
  1965.

\bibitem{VanDenBergen2001}
G.~van~den Bergen, ``{Proximity queries and penetration depth computation on 3d
  game objects},'' \emph{Game developers conference}, 2001.

\bibitem{Lojasiewicz1988}
S.~Lojasiewicz, \emph{{An Introduction to the Theory of Real Functions}}.\hskip
  1em plus 0.5em minus 0.4em\relax John Wiley {\&} Sons, Inc., 1988.

\bibitem{Scokaert1999a}
P.~O. Scokaert and J.~B. Rawlings, ``{Feasibility issues in linear model
  predictive control},'' \emph{AIChE Journal}, vol.~45, no.~8, pp. 1649--1659,
  1999.

\bibitem{Faulwasser2016}
T.~Faulwasser and R.~Findeisen, ``{Nonlinear model predictive control for
  constrained output path following},'' \emph{IEEE Transactions on Automatic
  Control}, vol.~61, no.~4, pp. 1026--1039, 2016.

\bibitem{Paliotta2019}
C.~Paliotta, E.~Lefeber, K.~Y. Pettersen, J.~Pinto, M.~Costa, and J.~{Tasso de
  Figueiredo Borgesde de Sousa}, ``{Trajectory Tracking and Path Following for
  Underactuated Marine Vehicles},'' \emph{IEEE Transactions on Control Systems
  Technology}, vol.~27, no.~4, pp. 1423--1437, 2019.

\bibitem{Bitar2019b}
G.~Bitar, B.-O.~H. Eriksen, A.~M. Lekkas, and M.~Breivik, ``{Energy-optimized
  hybrid collision avoidance for ASVs},'' \emph{2019 18th European Control
  Conference}, pp. 2522--2529, 2019.

\bibitem{Betts1998}
J.~T. Betts, ``{Survey of Numerical Methods for Trajectory Optimization},''
  \emph{Journal of Guidance, Control, and Dynamics}, vol.~21, no.~2, pp.
  193--207, 1998.

\bibitem{Gill2005}
P.~E. Gill, W.~Murray, and M.~A. Saunders, ``{SNOPT : An SQP Algorithm for
  Large-Scale Constrained Optimization},'' \emph{SIAM Review}, vol.~47, pp.
  99--131, 2005.

\end{thebibliography}
\end{document}